\documentclass[11pt]{article}
\usepackage[margin=1in]{geometry}
\usepackage{mdframed}
\usepackage{amsmath}
\usepackage{amsfonts}
\usepackage{amssymb}
\usepackage{amsthm}
\usepackage{newlfont}
\usepackage{mathtools,array}
\usepackage[dvipsnames]{xcolor}
\usepackage{graphicx}
\usepackage{mathtools}
\usepackage{amsthm}
\usepackage{hyperref}
\usepackage{cleveref}

\newcommand{\ceil}[1]{\lceil #1 \rceil}

\newcommand{\etal}{et al.\ }

\title{Algorithms for Intersection Graphs of Multiple Intervals and Pseudo-Disks}

\author{Chandra Chekuri \footnote{Chandra Chekuri is supported in part by NSF grants
		CCF-1526799 and CCF-1910149.} \\ University of Illinois, Urbana-Champaign \\ \texttt{chekuri@illinois.edu} \and Tanmay Inamdar \footnote{Tanmay Inamdar is supported in part by NSF grant CCF-1615845.} \\ The University of Iowa \\ \texttt{tanmay-inamdar@uiowa.edu}}
\date{}

\newcommand{\B}{\mathcal{B}}
\newcommand{\R}{\mathcal{R}}
\newcommand{\I}{\mathcal{I}}
\ifdefined\S
\renewcommand{\S}{\mathcal{S}}
\else
\newcommand{\S}{\mathcal{S}}
\fi
\newcommand{\NP}{\mathsf{NP}}

\newtheorem{theorem}{Theorem}

\theoremstyle{definition}
\newtheorem{definition}{Definition}
\newtheorem{observation}{Observation}

\newtheorem{lemma}{Lemma}
\newtheorem{remark}{Remark}

\begin{document}

\maketitle

\begin{abstract}
	 Intersection graphs of planar geometric objects such as intervals,
	disks, rectangles and pseudo-disks are well studied. Motivated by
	various applications, Butman et al.\ \cite{ButmanHLR10} in SODA 2007 considered algorithmic
	questions in intersection graphs of $t$-intervals. A $t$-interval
	is a union of at most $t$ distinct intervals (here $t$ is a parameter)
	--- these graphs are referred to as Multiple-Interval Graphs.
	Subsequent work by Kammer et al.\ \cite{KammerTV10} also considered $t$-disks and
	other geometric shapes. In this paper we revisit some of these
	algorithmic questions via more recent developments in computational
	geometry.  For the minimum weight dominating set problem, we give a
	simple $O(t \log t)$ approximation for Multiple-Interval Graphs,
	improving on the previously known bound of $t^2$ . We also show that
	it is $\mathsf{NP}$-hard to obtain an $o(t)$-approximation in this case. In fact, our results hold for the intersection graph of a set of
	$t$-pseudo-disks which is a much larger class.  We obtain an
	$\Omega(1/t)$-approximation for the maximum weight independent set in
	the intersection graph of $t$-pseudo-disks. Our results are based on
	simple reductions to existing algorithms by appropriately bounding the
	union complexity of the objects under consideration.
\end{abstract}

\section{Introduction}
A number of interesting optimization problems can be modeled as
packing and covering problems involving geometric objects in the plane
such as intervals, disks, rectangles, triangles, convex objects and
pseudo-disks. The intersection graphs of geometric objects are also of
much interest for both theoretical and practical reasons.  For
instance the famous Koebe-Andreev-Thurston theorem shows that every
planar graph can be represented as the intersection graph of
interior-disjoint disks in the plane \cite{thurston1979geometry}.
Interval graphs are another well-studied class of geometric
intersection graph. Such a graph is induced by a finite collection of
intervals on the real line --- each vertex of the graph represents a
(closed) interval on the real line, and there is an edge between two
vertices iff the corresponding intervals have a non-empty
intersection. Several algorithmic problems on interval graphs are
motivated by practical applications such as scheduling and resource
allocation.

Several papers in the past
\cite{Bar-YehudaHNSS06,Bar-YehudaR06,ButmanHLR10,KammerTV10} studied
generalization of geometric intersection graphs to the setting where
each (meta) object is now the union of a collection of base geometric
objects. To make the discussion concrete we first discuss $t$-interval
graphs. For an integer parameter $t$, a $t$-interval is the union of
at most $t$ intervals. A $t$-interval graph is the intersection graph
of a collection of $t$-intervals. These graphs are also called
multiple-interval graphs. They have been very well-studied from both a
graph theoretic and algorithmic point of view. For instance, every
graph with maximum degree $\Delta$ can be represented as a
$t$-interval graph for $t = \ceil{(\Delta + 1)/2}$
\cite{GriggsW80}. This demonstrates the modeling power obtained by
considering unions of simple geometric objects.  Butman \etal
\cite{ButmanHLR10}, building on \cite{Bar-YehudaHNSS06,Bar-YehudaR06}
(which primarily studied the maximum independent set problem),
considered several optimization problems over multiple-interval graphs
such as minimum vertex cover, minimum dominating set and maximum
clique.  Unlike the case of interval graphs where these problems are
tractable, the corresponding problems in multiple-interval graphs are
$\mathsf{NP}$-hard even for small values of $t$. Butmal \etal describe
approximation algorithms for these problems and the approximation
ratios depend on $t$. We refer the reader to \cite{ButmanHLR10} and
references therein for a detailed discussion of the literature and
applications of multiple-interval graphs. In a subsequent work Kammer
\etal \cite{KammerTV10,KammerT14} studied (among other results)
intersection graphs of $t$-disks (a $t$-disk is a union of at most $t$
disks) and $t$-fat objects and obtained approximation algorithms for
canonical optimization problems on these graphs such as independent
set, vertex cover and dominating set (see also \cite{YeB12}).

\medskip \noindent {\bf Motivation and our contribution:} The
contribution of this paper is to demonstrate that powerful techniques
from computational geometry that have been successfully used to
develop algorithms for packing and covering problems
\cite{ChanGKS12,ChanH12,Varadarajan10} can be easily extended to
provide algorithmic results for $t$-interval graphs and $t$-disks and
other geometric objects in a unified fashion. For some problems we
obtain substantially improved approximation bounds that are
near-optimal.  Our results extend to $t$-pseudo-disks while techniques
in earlier work that exploited properties of intervals
\cite{ButmanHLR10} or fatness properties of the underlying objects
\cite{KammerTV10} do not apply to pseudo-disks\footnote{A formal
	definition is given later in the paper. Informally a collection of
	connected regions in the plane is called a collection of pseudo-disks
	if the boundaries of any two of the regions intersect in at most two
	points. Note that the property is defined by the entire
	collection.}.

Before stating our results in full generality we consider the
following geometric covering problem. Given a collection of points on
the line and a collection of weighted intervals, find the minimum
weight subset of the given intervals that cover all the points.  This
is a special case of the Set Cover problem and can be solved
efficiently via dynamic programming or via mathematical programming
--- the natural LP in the interval case is known to be an integer
polytope since the incidence matrix between intervals and points is
totally unimodular (TUM) (it has the consecutive ones property). Now
consider the same problem where we need to cover a collection of
points by (weighted) $t$-intervals. Approximation algorithms for this
problem were considered by Hochbaum and Levin \cite{HochbaumL06} (in
the more general setting of multicover) --- they derived a relatively
straightforward $t$-approximation for this problem by reducing it,
via the natural LP relaxation, to the case of $t=1$. As far as we are
aware this was the best known approximation to this problem. A natural
question is whether the bound of $t$ can be improved. In this paper we
show that an $O(\log t)$-approximation can be obtained via tools from
computational geometry such as shallow-cell complexity and
quasi-uniform sampling.  It is an easy observation that an arbitrary
Set Cover instance where each set has at most $t$ elements can be
reduced to covering points by $t$-intervals; thus, for large values of
$t$ one obtains an $\Omega(\log t)$ hardness under $\mathsf{P} \neq \mathsf{NP}$. The
geometric machinery allows us to derive an $O(\log t)$-approximation
in the much more general setting of covering points by
$t$-pseudo-disks.

We state our results for $t$-pseudo-disks which capture several shapes
of interest including intervals and disks. The geometric approach
applies in more generality but here we confine our attention to
pseudo-disks.

\begin{itemize}
	\item An $O(\log t)$ approximation for minimum weight cover of points
	by $t$-pseudo-disks.
	\item An $O(t \log t)$ approximation for the minimum weight dominating
	set\footnote{A dominating set in a graph $G=(V,E)$ is a subset
		$S \subseteq V$ such that each node $u \in V$ is in $S$ or has a
		neighbor in $S$.} in $t$-pseudo-disk graphs. Even for $t$-intervals
	the best known previous approximation was $t^2$
	\cite{ButmanHLR10}. We observe that, via a simple reduction for Hypergraph Vertex
	Cover, it is $\mathsf{NP}$-hard to approximate dominating set to within a factor
	better than $t-1-\epsilon$ even in $t$-interval graphs. Under the Unique Games Conjecture, this lower bound can be slightly improved to $t-\epsilon$.
	\item An $\Omega(1/t)$ approximation for the maximum weight
	independent set in $t$-pseudo-disk graphs. A $1/(2t)$-approximation
	was known for $t$-intervals \cite{Bar-YehudaR06}. For more general
	shapes such as disks and fat objects, the approximation bounds in \cite{KammerT14}
	depend on $t$ and fatness parameters. Our approach generalizes to
	packing $t$-pseudo-disks into capacitated points.
\end{itemize}

We note that although our results are obtained via simple reductions to
existing algorithms in geometric packing and covering, these algorithms use fairly sophisticated ideas such as union complexity and Quasi-Uniform Sampling. Consequently, in some cases, the constants in the approximation guarantees may be worse compared to the known results. However, our results are applicable for wider class of geometric objects. Furthermore, it may be possible to improve the constants for the special case of intervals.

We describe the necessary geometric background in the next subsection. The three problems of
interest in this paper are minimum weighted set cover (MWSC), minimum weight
dominating set (MWDS) and maximum weight independent set (MWIS). We use
MSC, MDS and MIS to refer to the unweighted versions.

\subsection{Background from Geometric Approximation via LP Relaxations}
The approximability of MWSC, MWDS and MWIS in general graphs is
well-understood with essentially tight upper and lower bounds known.
In particular it is $\NP$-hard to obtain constant approximations for
MWDS and MWIS problems. However, in various geometric setting it is
possible to obtain improved algorithms including approximation schemes
(PTASes and QPTASes \cite{ChanH12,GibsonP10, ChanGKS12,AdamaszekW13,RoyGRR18,AdamaszekHW17}) and constant factor approximations
\cite{ChanH12,ChanGKS12,GibsonP10}. In this paper we are interested in LP-based approximations
for MWSC and MWIS that have been established via techniques that rely
on \emph{union complexity} of the underlying geometric objects. Union
complexity measures the worst-case representation size of the union of
a given collection of objects of a particular type or shape. In the
setting of planar obejcts the typical measure is the number of vertices
in the arrangement that appear on the boundary of the union. It is
well-known that many geometric objects such as intervals (on a line),
disks and squares (in the plane) have linear union complexity. In
fact, this holds for an even larger class of geometric objects, namely
pseudo-disks \cite{pach2008state}.

Bounds on union complexity have been used to obtain constant factor
and sub-logarithmic approximations for geometric MWSC and its
variants (see \cite{ClarksonV07,ChekuriCH12,Varadarajan10,ChanGKS12} and
\cite{MustafaV17} for a survey). Chan and Har-Peled \cite{ChanH12}
showed how union complexity can also be used to obtain improved
approximations for the MWIS problem. They give an LP rounding
algorithm with approximation guarantee $\Omega(n/u(n))$ for computing
an MWIS of $n$ objects with union complexity $u(\cdot)$.  We use this
result to give $\Omega(1/t)$-approximation for computing MWIS of
$t$-pseudo-disks. We note that although this implies
$\Omega(1/t)$-approximation for MWIS of $t$-intervals, $t$-disks and
$t$-squares; these results were already known
\cite{Bar-YehudaHNSS06,KammerTV10}. However, they use certain
``fatness'' properties of the underlying geometric objects, which do
not extend to pseudo-disks.

\medskip \noindent \emph{Shallow-cell complexity} (SCC) is a
generalization of the notion of union complexity to abstract set
systems \cite{ChanGKS12}. Low SCC has been used to obtain improved
approximations for MWSC and MWDS in geometric settings as well some
combinatorial settings.  Here, the general approach is to round a
feasible LP solution using a framework called \emph{Quasi-Uniform
	Sampling} introduced by Varadarajan \cite{Varadarajan10} originally
for geometric settings, and further refined and improved by Chan \etal
\cite{ChanGKS12}. We use this framework and the result in
\cite{ChanGKS12} to obtain an $O(t \log t)$-approximation for MWDS of
$t$-pseudo-disks by appropriately bounding the SCC of a related
instance via known results on MWDS of disks and pseudo-disks
\cite{GibsonP10,AronovDEP18}. This coupled with other ideas yields our
results on MWSC and MWDS for $t$-pseudo-disks.

\medskip
\noindent
\textbf{Orgnization.}  Section \ref{sec:prelims} introduces some
relevant notation and definitions. Section \ref{sec:covering}
describes algorithms for covering problems MWDS and MWSC
with a focus on the more involved MWDS problem. Section
\ref{sec:packing} describes algorithms for MWIS and a generalization.

\section{Preliminaries}
\label{sec:prelims}
A $t$-object is defined as the union of at most $t$ geometric
objects, where $t$ is a positive integer. Without loss of generality,
we assume that each $t$-object is a union of exactly $t$ objects. We
are typically interested in the case when the base objects come from a
specific class of geometric shapes, such as intervals, disks, and
pseudo-disks.

\begin{definition}[Pseudo-disks and $t$-pseudo-disks]
  A family $\mathcal{S}$ of connected regions bounded by simple closed
  curves in general position in the plane is called a collection of
  \emph{pseudo-disks}, if the boundaries of any distinct
  $S_i, S_j \in \mathcal{S}$ intersect at most twice.  Let
  $\mathcal{S}$ be a collection of pseudo-disks, and let
  $\mathcal{S'}$ be a collection of objects obtained by taking union
  of at most $t$ objects from $\mathcal{S}$. Then, we say that
  $\mathcal{S'}$ is a collection of $t$-pseudo-disks.
\end{definition}

Consider a $t$-object $O_i$. We use
$o_i^{(1)}, o_i^{(2)}, \ldots, o_i^{(t)}$ to denote the objects whose
union equals $O_i$. Then, by slightly abusing the notation, we also
denote by $O_i$ the set $\{o_i^{(1)}, o_i^{(2)}, \ldots,
o_i^{(t)}\}$. Note that, in some cases it may be desirable to require
that the objects $o_i^{(1)}, o_i^{(2)}, \ldots, o_i^{(t)}$ are
pairwise disjoint. However we do not impose such a restriction on the
objects and allow a more general model where they may intersect.

Next, we formally define the notion of union complexity, which is used to obtain the improved approximation for MWIS in Section 4.

\begin{definition}[Union complexity]
	Let $\S$ be a set of geometric objects in $\mathbb{R}^2$. If for any subset of objects $\R \subseteq \S$, the number of arcs on the boundary of the union of objects in $\R$ has at most $u(|\R|)$ vertices, then $\S$ is said to have union complexity $u(\cdot)$. We assume that $u(n) \ge n$ for any $n \ge 1$ and that $u$ is non-decreasing.
\end{definition}

\section{Minimum Weight Dominating Set and Set Cover}
\label{sec:covering}
As in Butman et al.\ \cite{ButmanHLR10}, we consider the following
slight generalization of the Minimum Dominating Set problem. Here, we
are given a undirected bipartite intersection graph of
$t$-objects. More formally, $G = (\R \sqcup \B, E)$, where
$\R = \{R_1, \ldots, R_N\}$ and $\B = \{B_1, \ldots, B_M\}$ are sets
of red and blue $t$-objects respectively. There is an edge between
$R_i \in \R$ and $B_j \in \B$ if $R_i \cap B_j \neq \emptyset$. Since
each vertex corresponds to a (red or blue) $t$-object, we will use the
term `vertex' and `$t$-object' interchangeably. Each $R_i \in \R$ has
a non-negative weight $w_i$. The goal is to find a subset
$\R' \subseteq \R$ of minimum weight, such that every blue vertex
$B_j$ has a neighbor in the set $\R'$. Note that the standard MWDS
problem is equivalent to the setting where $\R = \B$. In the
following, we consider the more general setting where $\R$ and $\B$
may be different.

\subsection{LP Relaxation and Rounding}
Let $\I = (\R, \B)$ denote the given instance of the (generalized)
MWDS. In an integer programming formulation we have a $\{0,1\}$
variable $x_i$ corresponding to a red $t$-object $R_i$ which is
intended to be assigned $1$ if $R_i$ is selected in the solution, and
is assigned $0$ otherwise. We relax the integrality constraints and
describe the natural LP relaxation below.
\begin{mdframed}[backgroundcolor=gray!9]
	\begin{alignat}{3}
	\text{minimize}\quad \displaystyle&\sum\limits_{R_i \in \R} w_i x_i & \nonumber \\
	\text{subject to}\quad \displaystyle&\sum\limits_{R_i: B_j \cap R_i \neq \emptyset} x_{i} \geq 1, \quad & \forall B_j \in \B \label[constr]{constr:coverage}\\
	\displaystyle & \qquad x_i \in [0, 1], \quad & \forall R_i \in \R\label[constr]{constr:fractional-x}
	\end{alignat}
\end{mdframed}

Let $x$ be an optimal LP solution for the given instance
$\I = (\R, \B)$. Since $x$ is feasible, for every blue $t$-object
$B_j$, we have
$\displaystyle \sum_{R_i: B_j \cap R_i \neq \emptyset} x_i \ge 1$. For
each $B_j \in \B$, let $b_j^{(k)} \in B_j$ denote an object maximizing
the quantity
$\displaystyle \sum_{R_i: b_j^{(k)} \cap R_i \neq \emptyset} x_i$,
where the ties are broken arbitrarily. For simplicity, we denote
$b_j^{(k)}$ by $b'_j$. Let $\B' \coloneqq \{b'_j: B_j \in \B\}$. Note
that for any $b'_j \in \B'$,
$\displaystyle \sum_{R_i : b'_j \cap R_i \neq \emptyset} x_i \ge 1/t$.

For any $R_i \in \R$, let $x'_i = \min\{tx_i, 1\}$, and let $x'$
denote the resulting solution. Then, for any $b'_j \in \B'$,
$\displaystyle \sum_{R_i: b'_j \cap R_i \neq \emptyset} x'_i \ge
1$. We emphasize that $\B'$ is a collection of ($1$-)objects, whereas
$\R$ is a collection of $t$-objects.  The following observation
follows from the definition of $x'$.

\begin{observation}\label{obs:soln}
  The solution $x'$ is feasible for the instance $\I' = (\R, \B')$, and
  we have that $\sum_{R_i \in \R} w_i x'_i \le t \cdot \sum_{R_i \in \R} w_i x_i.$
\end{observation}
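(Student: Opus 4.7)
The plan is to verify the two assertions separately: (i) feasibility of $x'$ for the instance $\I' = (\R, \B')$, and (ii) the weight bound $\sum_i w_i x'_i \le t \sum_i w_i x_i$. Neither step has a real obstacle; the work has effectively been done in the paragraphs preceding the observation, and the remaining task is essentially a two-line verification.

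For feasibility, I would fix an arbitrary $b'_j \in \B'$ and invoke the fact already recorded just above the observation, namely $\sum_{R_i : b'_j \cap R_i \neq \emptyset} x_i \ge 1/t$. This in turn comes from an averaging step: $x$ is LP-feasible, so $\sum_{R_i : B_j \cap R_i \neq \emptyset} x_i \ge 1$, and since $B_j$ is the union of at most $t$ base objects $b_j^{(1)}, \ldots, b_j^{(t)}$, any $R_i$ intersecting $B_j$ intersects some $b_j^{(k)}$, so one of the $t$ sums $\sum_{R_i : b_j^{(k)} \cap R_i \neq \emptyset} x_i$ must be at least $1/t$; $b'_j$ is precisely the maximizer.

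Given this $1/t$ lower bound, I would split into two cases. If some $R_i$ appearing in the sum already has $x_i \ge 1/t$, then $x'_i = \min\{t x_i, 1\} = 1$ and the covering constraint for $b'_j$ in $\I'$ is satisfied on the nose. Otherwise every such $x_i$ is strictly less than $1/t$, which means $x'_i = t x_i$ for each of them, and
\begin{equation*}
\sum_{R_i : b'_j \cap R_i \neq \emptyset} x'_i \;=\; t \sum_{R_i : b'_j \cap R_i \neq \emptyset} x_i \;\ge\; t \cdot \tfrac{1}{t} \;=\; 1.
\end{equation*}
Either way, $x'$ satisfies the coverage constraint for $b'_j$, and since $x'_i \in [0,1]$ by construction, $x'$ is feasible for $\I' = (\R, \B')$.

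For the weight bound, I would simply note that $x'_i = \min\{t x_i, 1\} \le t x_i$ for every $i$, so multiplying by the nonnegative weight $w_i$ and summing over $R_i \in \R$ yields $\sum_{R_i \in \R} w_i x'_i \le t \sum_{R_i \in \R} w_i x_i$. This completes both parts of the observation.
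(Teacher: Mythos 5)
Your proof is correct and matches the paper's reasoning: the paper states the observation as an immediate consequence of the definition of $x'$ and the preceding averaging argument that $\sum_{R_i : b'_j \cap R_i \neq \emptyset} x_i \ge 1/t$, which is exactly what you spell out (your case split on whether some $x_i \ge 1/t$ is just the explicit verification of the claim $\sum_{R_i: b'_j \cap R_i \neq \emptyset} \min\{t x_i, 1\} \ge 1$). No differences of substance.
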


The preceding step to reduce to $\I'$ is essentially the same as in \cite{ButmanHLR10}.

\paragraph*{Shallow-Cell Complexity}
Now, we describe how to round $x'$ to an integral solution via
Quasi-Uniform Sampling technique. To discuss this, we first consider
the closely related Minimum Weight Set Cover problem (MWSC).

Let $X$ be a set of $M$ elements, and $\mathcal{S}$ be a collection of
$N$ subsets of $X$. Each set $S_i \in \mathcal{S}$ has a non-negative
weight $w_i$. In MWSC the goal is to find a minimum-weight collection
of sets $\mathcal{S}'$ such that
$\bigcup_{S_i \in \mathcal{S}'} S_i = X$.

The standard LP relaxation for MWSC is as follows.
\begin{eqnarray*}
  \min \sum_{S_i \in \mathcal{S}} w_i x_i & & \\
  \sum_{S_i \ni j} x_i & \ge & 1  \quad \forall j \in X \\
  x_i &  \ge & 0 \quad \forall S_i \in \mathcal{S}
\end{eqnarray*}
Let $A \in \{0, 1\}^{M \times N}$ denote the constraint matrix in the
LP relaxation above. Each row of $A$ corresponds to an element, and
each column corresponds to a set. The entry $A_{ij} = 1$ if the
element $j$ is contained in $S_i$, otherwise $A_{ij} = 0$.

The following crucial definition is from \cite{ChanGKS12}.

\begin{definition}[Shallow-Cell Complexity]
  Let $1 \le k \le n \le N$. Let $S$ be any set of $n$ columns and let
  $A_S$ be the matrix restricted to the columns of $S$. If the number
  of distinct rows in $A_S$ with at most $k$ ones is bounded by
  $f(n, k)$ for any choice of $n, k$ and $S$, then the instance is
  said to have the shallow-cell complexity $f(n, k)$.
\end{definition}

Using bounds on the Shallow-Cell Complexity of an MWSC instance, it is
possible to round a feasible LP solution using a technique known as
Quasi-Uniform Sampling \cite{ChanGKS12,Varadarajan10}.

\begin{theorem}[\cite{ChanGKS12,Varadarajan10}] \label{thm:qus}
  Consider an MWSC instance with Shallow-Cell Complexity
  $f(n, k) = n\phi(n) \cdot k^c$, where $\phi(n) = O(n)$, and
  $c \ge 0$ is a constant. Then, there exists an algorithm to round
  any feasible LP solution for this instance within a factor of
  $O(\max\{\log \phi(N), 1 \})$, where the constant hidden in Big-Oh
  notation depends on the exponent $c$.
\end{theorem}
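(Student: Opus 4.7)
The plan is to follow the quasi-uniform sampling framework pioneered by Varadarajan \cite{Varadarajan10} and refined by Chan, Grant, K\"onemann, and Sharpe \cite{ChanGKS12}. Let $x$ be the given feasible LP solution, with value $W^\star = \sum_i w_i x_i$. The goal is to produce a (random) integral cover of expected weight $O(\max\{\log \phi(N),1\})\cdot W^\star$.

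First I would pass from the weighted LP solution to a weighted hypergraph on which an $\varepsilon$-net style argument can be applied. The cleanest device is to work with a measure on the collection $\mathcal{S}$ in which set $S_i$ has mass $w_i x_i$ (or equivalently, to replace each $S_i$ by $\lceil w_i/w_{\min}\rceil$ unweighted copies and scale). Feasibility of $x$ then guarantees that every element $j$ is covered to total measure at least $1$, so any $\varepsilon$-net with $\varepsilon < 1$ serves as a valid cover. Next I would invoke the $\varepsilon$-net theorem for set systems of bounded shallow-cell complexity: when $f(n,k)=n\phi(n)k^c$, an $\varepsilon$-net of size $O\!\left(\tfrac{1}{\varepsilon}\log\phi(\tfrac{1}{\varepsilon})\right)$ exists, with the hidden constant depending only on $c$. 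The proof of this $\varepsilon$-net bound proceeds via a two-stage random sample: a base sample of expected size of order $1/\varepsilon$, followed by a multiplicative-weights boost so that elements left uncovered in one round contribute more to the next round's sampling probabilities. In that analysis the $k^c$ factor in the SCC is absorbed into a constant depending on $c$, while the $\phi(n)$ factor yields the $\log \phi(N)$ term.

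The hardest step, and the heart of the cited papers, is the union bound that proves the sampled net covers every element with high probability. One needs to show that, conditioned on any realization of nearby sets, the probability that an element $j$ is left uncovered is small enough to sum over all ``shallow cells'' produced by the sample; it is exactly here that the quantity $f(n,k)$ with its decoupling of sample-size $n$ and depth $k$ is used, together with a Chernoff-style concentration bound tuned to the multiplicative-weight update. Carrying this out in the \emph{weighted} regime -- as opposed to the classical unweighted $\varepsilon$-net setting -- is the main technical contribution of \cite{Varadarajan10,ChanGKS12} and would be the chief obstacle in reproving the theorem from scratch. Once the net-size bound is in hand, the expected weight of the output is bounded by $W^\star \cdot O(\max\{\log \phi(N),1\})$, completing the argument.
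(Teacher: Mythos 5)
First, a point of comparison: the paper does not prove Theorem~\ref{thm:qus} at all --- it is imported as a black box from \cite{Varadarajan10,ChanGKS12} --- so your sketch is really being measured against those papers rather than against anything proved here. Your outline does identify the right machinery (quasi-uniform sampling on top of shallow-cell complexity), and you correctly flag that the weighted analysis is the heart of the matter.

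As a proof attempt, however, there is a concrete gap beyond the deferred union bound. The opening reduction is off: if each $S_i$ is given mass $w_i x_i$, feasibility of $x$ does \emph{not} give every element coverage at least $1$ (the sets covering an element may all be very light); the measure for which feasibility yields coverage $1$ is $x_i$ itself, with $\varepsilon = 1/\sum_i x_i$. More importantly, even with the correct measure, an $\varepsilon$-net theorem bounds the \emph{cardinality} of the net, not its weight, and the refined nets of size $O\!\left(\tfrac{1}{\varepsilon}\log\phi(\tfrac{1}{\varepsilon})\right)$ for systems with shallow-cell complexity $n\phi(n)k^c$ are \emph{not} produced by i.i.d.\ sampling from the measure, so one cannot convert the size bound into an expected-weight bound by treating each net element as a random draw. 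This is precisely the failure mode quasi-uniform sampling was invented to fix: the cover is constructed so that each set is included with probability $O(\max\{\log\phi(N),1\})\cdot x_i$, obliviously to the weights, and the weight bound then follows by linearity of expectation; your copies-of-sets device does not substitute for this, since picking one copy of a heavy set is cheap in cardinality but not in weight. Your sketch names this step but does not supply it. Since the paper itself simply cites \cite{Varadarajan10,ChanGKS12}, the practical fix is the same: either quote the theorem as a known result, or reproduce the quasi-uniformity argument, which is the one ingredient your write-up leaves out.
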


Usually, $\phi(n)$ is a function of $n$ such that $\phi(n) =
O(n)$. However, the same guarantee holds even when $\phi$ is
independent of $n$ . When we apply this theorem, we will set
$\phi(n) = O(t^4)$, which is independent of $n$.

Now, we consider an instance $\I' = (\R, \B')$ of the MWDS problem
obtained in the previous section. Recall that $\R$ is a collection of
$t$-objects and $\B$ is a collection of $1$-objects. Now, let
$\R' = \{r^{(k)}_i \in R_i: R_i \in \R \}$ denote the collection of
constituent red $1$-objects from $\R$. Let $\I'' = (\R', \B')$ denote
the MWDS instance thus obtained. Notice that an MWDS instance can also
be thought of as a MWSC instance. We first prove the following simple
lemma.

\begin{lemma} \label{lem:scc} Let $\I', \I''$ be MWDS instances as
  defined above. If the shallow-cell complexity of $\I''$ is
  $f(n, k)$, then the shallow-cell complexity of the corresponding
  instance $\I'$ is $g(n, k) \le f(nt, kt)$ for any
  $1 \le k \le n \le N$.
\end{lemma}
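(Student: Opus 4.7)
The plan is to exhibit a natural correspondence between columns of $\I'$'s constraint matrix and groups of $t$ columns of $\I''$'s constraint matrix, and then to argue that (a) distinct rows of a column-restriction in $\I'$ lift to distinct rows of the associated column-restriction in $\I''$, and (b) the number of ones in a row is inflated by at most a factor of $t$ when passing from $\I'$ to $\I''$. Combining these two facts immediately yields $g(n,k) \le f(nt, kt)$.

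More concretely, I would fix an arbitrary set $S$ of $n$ columns of $\I'$'s matrix $A'$, corresponding to red $t$-objects $R_{i_1}, \ldots, R_{i_n}$, and define the associated column set $S''$ of $\I''$'s matrix $A''$ as the $nt$ constituent $1$-objects $\{r^{(k)}_{i_\ell} : \ell \in [n],\, k \in [t]\}$. For any blue $1$-object $b'_j \in \B'$, the row of $A'_S$ at $b'_j$ has a $1$ in column $R_{i_\ell}$ exactly when $b'_j$ intersects some $r^{(k)}_{i_\ell}$, i.e.\ exactly when the row of $A''_{S''}$ at $b'_j$ has a $1$ in at least one of the $t$ columns $r^{(1)}_{i_\ell}, \ldots, r^{(t)}_{i_\ell}$. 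Hence the $A'_S$ row is obtained from the $A''_{S''}$ row by taking OR in blocks of size $t$, so two distinct rows of $A'_S$ must come from two distinct rows of $A''_{S''}$.

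For the second ingredient, if a row of $A'_S$ has at most $k$ ones, then $b'_j$ intersects at most $k$ of the chosen $t$-objects, and since each such $R_{i_\ell}$ has exactly $t$ constituents, $b'_j$ intersects at most $kt$ elements of $S''$; that is, the corresponding row of $A''_{S''}$ has at most $kt$ ones. Putting the two steps together, the number of distinct rows of $A'_S$ with at most $k$ ones is bounded above by the number of distinct rows of $A''_{S''}$ with at most $kt$ ones, which by the shallow-cell complexity hypothesis on $\I''$ is at most $f(nt, kt)$. Taking the supremum over choices of $S$ and over $n, k$ gives $g(n,k) \le f(nt, kt)$, as desired.

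I do not expect a serious obstacle: the argument is essentially a bookkeeping lemma about set-system unions. The only point that demands care is verifying that the map from $A'_S$-rows to $A''_{S''}$-rows is injective on distinct rows (this is the OR-on-blocks observation); once that is in place, the factor-$t$ loss in both arguments of $f$ is immediate from the definition of a $t$-object.
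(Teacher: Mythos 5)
Your proposal is correct and follows essentially the same route as the paper's proof: restrict to the $n$ chosen $t$-objects, pass to their at most $nt$ constituent $1$-objects, and map rows with at most $k$ ones to rows with at most $kt$ ones. If anything, your OR-in-blocks observation spells out the injectivity of distinct rows more explicitly than the paper's brief ``corresponds to exactly one row'' remark.
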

\begin{proof}
  We prove this fact from the definition of the shallow-cell
  complexity. Let $A^{\I'}$ denote the constraint matrix corresponding
  to the instance $\I'$. Fix some positive integers $n, k$ such that
  $1 \le k \le n \le |\R|$, and fix a set $S \subseteq \R$ of rows
  (i.e., $t$-objects), where $|S| = n$. Let $A^{\I'}_S$ denote the
  constraint matrix restricted to the columns corresponding to
  $S$. Let $P$ denote the set of distinct rows (i.e., blue objects) in
  $A^{\I'}_S$ with at most $k$ ones. We seek to bound $|P|$.

  Let $S' = \{r^{(k)}_i \in R_i : R_i \in S \}$ be the corresponding
  constituent $1$-objects. Note that $S' \subseteq \R'$, and
  $|S'| \le nt$. Let $A^{\I''}$ denote the constraint matrix
  corresponding to the instance $\I_1$ and let $A^{\I''}_{S'}$ denote
  $A^{\I''}$ restricted to the columns of $S'$. Let $P'$ denote the
  set of rows in $A^{\I''}_{S'}$ with at most $kt$ ones. Note that, if
  a row has at most $k$ ones in $A^{\I'}_S$, then it corresponds to
  exactly one row in $P'$. Therefore, $|P| \le |P'| \le f(nt, kt)$,
  where the last inequality follows from the definition of the
  Shallow-Cell Complexity of the instance $\I''$.
\end{proof}

Now, we state the known bounds on the Shallow-Cell Complexity of an
MWDS instance induced by red and blue collections of pseudo-disks.

\begin{theorem}[\cite{AronovDEP18}] \label{thm:pseudodisks} The
  Shallow-Cell Complexity of an MWDS instance induced by collections
  of pseudo-disks is at most $f(n, k) = O(nk^3)$.
\end{theorem}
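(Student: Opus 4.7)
The plan is to bound, for any fixed set of $n$ red pseudo-disks $\{R_1,\ldots,R_n\}$, the number of distinct subsets of size at most $k$ that can be realized as $N(B) = \{R_i : R_i \cap B \neq \emptyset\}$ as $B$ ranges over all blue pseudo-disks. This is precisely the shallow-cell complexity of the constraint matrix restricted to these $n$ columns. My approach is to split each such incidence set by geometric type, and then invoke shallow-level bounds in pseudo-disk arrangements to count each type.

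First I would decompose $N(B)$ into three parts: (i) red pseudo-disks $R_i$ whose boundary $\partial R_i$ crosses $\partial B$; (ii) red pseudo-disks that strictly contain $B$; and (iii) red pseudo-disks strictly contained in $B$. For type (ii), pick any point $p \in B$; the set of containing $R_i$ is exactly the set of red pseudo-disks whose region includes $p$, which is determined by the face of the arrangement of $\{R_1,\ldots,R_n\}$ in which $p$ lies. The classical shallow-level theorem for pseudo-disks says that the number of faces of this arrangement at depth at most $k$ is $O(nk)$, so type (ii) contributes only $O(nk)$ distinct patterns. Type (iii) is dual: each contained $R_i$ has a representative point in $B$, and the cardinality bound of $k$ allows us to bound the distinct families using the same shallow-level estimate, up to an extra polynomial factor in $k$.

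The bulk of the work is type (i). Because the boundaries of two pseudo-disks meet in at most two points, $\partial B$ crosses the union $\bigcup_i \partial R_i$ in at most $2k$ points, and these crossings form a cyclic sequence along $\partial B$. Two blue pseudo-disks with the same combinatorial sequence of crossings (i.e., same cyclic word over the $R_i$'s traversed) induce the same type-(i) pattern. Thus it suffices to bound the number of such distinct words. I would argue this by charging each such word to a starting crossing vertex, of which there are at most $O(nk)$ (vertex of the $\leq k$-level of the pseudo-disk arrangement), and then use the pseudo-disk property to control how many extensions of length $\leq 2k$ are combinatorially realizable, giving $O(nk \cdot k^2) = O(nk^3)$ via a Davenport–Schinzel-type bound on such cyclic sequences.

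The main obstacle is this last combinatorial step: making the jump from "starting arc plus structured cyclic sequence" to the clean $O(nk^3)$ bound. Naively one might lose additional $\log$ or $\alpha(\cdot)$ factors, and the pseudo-disk property has to be used delicately to rule out pathological crossing patterns. This is precisely what the cited result of Aronov–de Berg–Ezra–Pach handles via a refined charging scheme on the shallow-cutting of the pseudo-disk arrangement, and I would appeal to (or recapitulate) that argument to close the bound. Combining the three types yields the claimed shallow-cell complexity $f(n,k) = O(nk^3)$.
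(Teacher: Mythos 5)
There is a genuine gap, and it sits exactly where the difficulty of the theorem lies. Note first that the paper itself does not prove this statement: it is imported verbatim from \cite{AronovDEP18} and used as a black box. Your outline, read as a proof, is therefore only acceptable if it is self-contained --- but it is not. The entire weight of the bound rests on your type-(i) count: showing that the number of distinct incidence patterns realizable by a blue pseudo-disk whose boundary crosses at most $k$ of the $n$ red pseudo-disks is $O(nk^3)$. Your charging scheme (anchor the cyclic crossing word at a vertex of the at-most-$k$-level, then bound the number of realizable extensions of length at most $2k$ by ``a Davenport--Schinzel-type bound'') is not an argument: nothing in the sketch explains why only $O(k^2)$ words can be anchored at a given vertex, and a naive count of words over an alphabet of the $\le k$ red disks met by $B$ is exponential in $k$, not polynomial. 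You acknowledge this yourself and propose to ``appeal to (or recapitulate)'' the argument of Aronov et al.\ to close it --- which makes the proposal circular as a proof of the very statement being established. Identifying the decomposition into crossing/containing/contained disks and invoking the $O(nk)$ bound on the complexity of the at-most-$k$-level is the easy part; the theorem is genuinely the missing combinatorial step.

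Two smaller inaccuracies are worth flagging. For type (ii), the set of red pseudo-disks containing a point $p \in B$ is not ``exactly'' the set of red pseudo-disks containing $B$: it also includes disks that contain $p$ but whose boundaries cross $\partial B$, so the pattern is only recoverable from the face of $p$ \emph{together with} the crossing information, and the count must be organized accordingly. For type (iii) (red disks contained in $B$) you give no argument at all beyond ``dual \ldots up to an extra polynomial factor in $k$''; this case is not symmetric to type (ii), since a single blue disk can contain many red disks whose witness points lie in many different faces, and controlling the number of distinct such families is again nontrivial. If your intent is to use the theorem as the paper does, the correct move is simply to cite \cite{AronovDEP18}; if your intent is to prove it, the type-(i) (and type-(iii)) counting must be carried out in full rather than deferred.
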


Combining Theorems \ref{thm:qus} and \ref{thm:pseudodisks} and Lemma
\ref{lem:scc}, we obtain the following result.

\begin{theorem}
	There exists a randomized polynomial time $O(t \log t)$ approximation algorithm for the Red-Blue Dominating Set induced by a set of $t$-pseudo-disks.
\end{theorem}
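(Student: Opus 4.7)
The plan is to assemble the theorem directly from the machinery already built up in the excerpt: solve the LP for $\I = (\R, \B)$, use the reduction preceding Observation~\ref{obs:soln} to obtain a fractional solution feasible for $\I' = (\R, \B')$, then round $x'$ using Quasi-Uniform Sampling via a shallow-cell complexity bound for $\I'$ derived from the known bound for pure pseudo-disks.

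First I would solve the natural LP relaxation of $\I$ in polynomial time to obtain an optimal fractional solution $x$. Applying the pre-processing step already described in the excerpt, for each $B_j \in \B$ I would pick the constituent pseudo-disk $b'_j \in B_j$ carrying the most LP mass and define $x'_i = \min\{t x_i, 1\}$. Observation~\ref{obs:soln} then tells me that $x'$ is LP-feasible for $\I' = (\R, \B')$ and that its weight is at most $t$ times the LP value of $\I$; this costs a factor of $t$ in the final approximation but moves the blue side from $t$-pseudo-disks to pseudo-disks, which is exactly the setting where I can apply Theorem~\ref{thm:pseudodisks}.

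Next I would bound the shallow-cell complexity of $\I'$. Consider the auxiliary instance $\I'' = (\R', \B')$ obtained by breaking every red $t$-pseudo-disk into its $t$ constituent pseudo-disks. Both sides of $\I''$ are pure pseudo-disks, so Theorem~\ref{thm:pseudodisks} gives SCC $f(n,k) = O(nk^3)$. By Lemma~\ref{lem:scc}, the SCC of $\I'$ satisfies $g(n,k) \le f(nt, kt) = O(nt \cdot (kt)^3) = O(n \cdot t^4 \cdot k^3)$. Writing this in the form $n\,\phi(n)\,k^c$ with $\phi(n) = O(t^4)$ (independent of $n$) and $c = 3$, Theorem~\ref{thm:qus} yields a randomized polynomial time rounding of $x'$ that loses an additional factor $O(\max\{\log t^4, 1\}) = O(\log t)$, producing an integral cover of $\B'$ by elements of $\R$.

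Finally I would transfer feasibility back to $\I$: any $R_i$ that intersects $b'_j$ also intersects $B_j$ since $b'_j \subseteq B_j$, so the integral solution output for $\I'$ is a dominating set for $\I$. Composing the two losses gives weight at most $O(t \log t)$ times the LP optimum for $\I$, hence $O(t \log t)$ times the integer optimum. The only non-mechanical step is the SCC translation in Lemma~\ref{lem:scc}, which is already proved; everything else is a bookkeeping composition. The main obstacle, such as it is, is simply checking that $\phi(n) = O(t^4)$ is an admissible choice in Theorem~\ref{thm:qus} (the statement explicitly allows $\phi$ independent of $n$, as noted in the remark after that theorem), so no additional work is required.
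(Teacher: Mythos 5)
Your proposal is correct and follows essentially the same route as the paper's own proof: LP relaxation for $\I$, the reduction to $\I'=(\R,\B')$ via Observation~\ref{obs:soln} (losing a factor $t$), the SCC bound $g(n,k) \le f(nt,kt) = O(nt^4k^3)$ from Lemma~\ref{lem:scc} and Theorem~\ref{thm:pseudodisks}, and Quasi-Uniform Sampling (Theorem~\ref{thm:qus}) with $\phi(n)=O(t^4)$ for the remaining $O(\log t)$ factor. Your explicit check that feasibility transfers back to $\I$ because $b'_j \subseteq B_j$ is a detail the paper leaves implicit, but the argument is the same.
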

\begin{proof}
  Let $\I = (\R, \B)$ be the original instance and let $x$ be an
  optimal LP solution. Define instance $\I' = (\R, \B')$ and the
  corresponding LP solution $x'$ as before. From Observation
  \ref{obs:soln}, we have that
  $\sum_{R_i \in \R} w_i x'_i \le t \cdot \sum_{R_i \in \R} w_i
  x_i$. By Lemma \ref{lem:scc} the shallow-cell complexity of $\I'$ is
  $g(n, k) \le f(nt, kt)$, where $f(n, k) = O(nk^3)$ by Theorem
  \ref{thm:pseudodisks}. Therefore, $g(n, k) \le O(nt^{4} k^3)$. Now,
  using the algorithm from Theorem \ref{thm:qus} with
  $\phi(n) = O(t^4)$, we can round $x'$ to an integral solution of
  cost at most
  $O(\log t) \cdot \sum_{R_i \in \R} w_i x'_i \le O(t \log t) \cdot
  \sum_{R_i \in \R} w_i x_i$, where the inequality follows from
  Observation \ref{obs:soln}.
\end{proof}

\begin{remark}
  Suppose we have an instance of $(\mathcal{R},\mathcal{B})$ of
  generalized MWDS where each red object is a $t_R$-pseudo-disk and
  each blue object is a $t_B$-pseudo-disk. Then the preceding analysis
  can be extended to obtain an $O(t_B \log t_R)$ approximation. We note that for the analogous version of intervals, a $(t_B \cdot t_R)$-approximation was obtained in \cite{ButmanHLR10}.
\end{remark}

\paragraph*{Geometric MWSC with $t$-objects.}
Consider a Geometric MWSC instance $(X, \mathcal{S})$, where
$X$ is a set of points, and $\mathcal{S}$ is a collection of $N$ sets
induced by $t$-objects. Recall that the goal is to find a
minimum-weight collection of $t$-objects $\mathcal{S}'$ that
\emph{covers} the set of points $X$. Furthermore, suppose that the
underlying geometric ($1$)-objects define a Set Cover instance with
Shallow-Cell Complexity $f(n, k) = n k^c$, for constant $c$. This
includes geometric objects such as intervals on a line, and
(pseudo-)disks in the plane. Then, using similar arguments, one can
bound the shallow-cell complexity of the MWSC instance by
$O(nk^c t^{c+1})$. Then Theorem \ref{thm:qus} implies an
$O(\log t)$-approximation.

If the underlying geometric objects are a set of fat triangles, then
the Shallow-Cell Complexity can be bounded by
$f(n, k) = n \log^* n \cdot k^c$, for some constant $c$
\cite{AronovBES14,ChanGKS12}. Using similar arguments, we can bound
the Shallow-Cell Complexity of the MWSC instance by
$f(nt, kt) = n \log^*(nt) \cdot t^{c+1} k^c$. Then, Theorem
\ref{thm:qus} implies an
$O(\log(t^{c+1} \cdot \log^*(Nt))) = O(\log t + \log \log^*
Nt)$-approximation. Therefore, we obtain the following result.

\begin{theorem}
  There exists an $O(\log t)$ approximation algorithm for covering a
  set of points by $t$-objects, where the underlying geometric objects
  are intervals on a line, or (pseudo-)disks in the plane. If the
  underlying objects are fat triangles in plane, we get a
  $O(\log t + \log \log^* Nt)$-approximation.
\end{theorem}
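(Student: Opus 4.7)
My plan is to mimic the reduction used for MWDS in the previous subsection, but applied directly at the level of the Set Cover LP for $t$-objects, since here the ``points to be covered'' are already $1$-objects and no preliminary step analogous to passing from $\I$ to $\I'$ is required.

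I would first write down the natural LP relaxation for the MWSC instance $(X,\mathcal{S})$, where each variable $x_i$ corresponds to a $t$-object $S_i \in \mathcal{S}$ and each covering constraint corresponds to a point in $X$. The goal is to argue that the $\{0,1\}$-constraint matrix of this LP has controlled shallow-cell complexity, so that Theorem \ref{thm:qus} applies. To that end I would establish an analogue of Lemma \ref{lem:scc}: expanding each $t$-object into its $t$ constituent base objects yields a Set Cover instance $(X,\mathcal{S}^{(1)})$ on the same point set whose 1-object shallow-cell complexity is, by assumption, $f(n,k)=nk^c$ (respectively $n\log^* n\cdot k^c$ for fat triangles). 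The key observation is that if we fix any $n$ columns $S \subseteq \mathcal{S}$ of the $t$-object matrix, then unpacking these columns into their at most $nt$ base objects produces a column set $S'$ in the 1-object matrix; a row (i.e., a point) that has at most $k$ ones in the restricted $t$-object matrix has at most $kt$ ones in the restricted 1-object matrix, because each $t$-object chosen from $S$ contributes at most $t$ base objects containing the point. Hence the shallow-cell complexity of the $t$-object instance is at most $f(nt,kt)$.

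Plugging in the assumed 1-object bounds, I obtain $f(nt,kt)=O(nt\cdot (kt)^c)=O(n\,t^{c+1}\,k^c)$ for intervals and pseudo-disks, which fits the form $n\phi(n)k^c$ of Theorem \ref{thm:qus} with $\phi(n)=O(t^{c+1})$; note that $\phi$ is independent of $n$, so the remark after Theorem \ref{thm:qus} allowing constant $\phi$ is what lets the bound depend only on $t$. This gives approximation ratio $O(\max\{\log \phi(N),1\})=O(\log t)$, as claimed. For fat triangles the same computation gives $f(nt,kt)=O(n\log^*(nt)\cdot t^{c+1}\,k^c)$, whence $\phi(n)=O(t^{c+1}\log^*(Nt))$ and Theorem \ref{thm:qus} yields $O(\log(t^{c+1}\log^*(Nt)))=O(\log t+\log\log^* Nt)$.

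There is no genuine obstacle beyond bookkeeping: the only step with any content is the shallow-cell inequality $g(n,k)\le f(nt,kt)$, which is essentially the argument of Lemma \ref{lem:scc} transplanted from the MWDS setting to the MWSC setting (and is in fact a touch cleaner here, since the rows are genuine points rather than $1$-objects chosen from a $t$-object). The only subtlety worth flagging is ensuring that $\phi$ remains polynomial in $t$ alone, so that $\log\phi$ collapses to $O(\log t)$; this is exactly why we want $c$ to be a constant and why the quasi-uniform sampling framework of \cite{ChanGKS12} is strong enough to absorb the $t^{c+1}$ blowup inside a single logarithm.
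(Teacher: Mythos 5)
Your proposal is correct and follows essentially the same route as the paper: bound the shallow-cell complexity of the $t$-object MWSC instance by $f(nt,kt)$ via the argument of Lemma \ref{lem:scc}, note that no preliminary reduction (and hence no factor-$t$ loss as in Observation \ref{obs:soln}) is needed since the elements are points, and then invoke Theorem \ref{thm:qus} with $\phi(n)=O(t^{c+1})$ for intervals and pseudo-disks and $\phi(n)=O(t^{c+1}\log^*(Nt))$ for fat triangles to get the claimed $O(\log t)$ and $O(\log t+\log\log^* Nt)$ bounds.
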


We note that this result improves on $t$-approximation for covering
points by $t$-intervals, which follows from the result of Hochbaum and
Levin \cite{HochbaumL06}. We also note that $O(\log t)$ is tight up to
constant factors even for covering points by $t$-intervals, which
follows from a reduction from the Set Cover problem as observed in
\cite{ButmanHLR10}.

\subsection{Integrality of MWDS LP for Intervals}
In this subsection, we prove that the MWDS LP for intervals is
integral. Let $\I = (\R, \B)$ be an MWDS instance, where $\R$ and $\B$
are collections of intervals on a line. Butman et al.\
\cite{ButmanHLR10} proved this using a primal-dual algorithm that
constructs an integral solution of the same cost as the LP. We give a
simpler proof of this fact by using structure of the constraint
matrix.

First, we preprocess blue intervals such that no blue interval is
completely contained inside another blue interval, and let $\I'$
denote the resulting instance. Suppose $B_1, B_2 \in \B$ are two
intervals such that $B_1 \subset B_2$. Then, any feasible integral
solution must include a red interval $R$ that intersects $B_1$, and
thus $B_2$. Furthermore, the LP constraint corresponding to $B_2$ is
implied by the constraint corresponding to $B_1$. Therefore, the
feasible regions of the LP's corresponding to $\I$ and $\I'$ are also
the same.

Now, in the following theorem we show that the constraint matrix of
the LP corresponding to $\I'$ satisfies the consecutive ones property,
and thus it is totally unimodular. This implies that the LP
corresponding to $\I_1$ (and therefore $\I$) is integral (see, e.g.,
\cite{schrijver2003combinatorial}).

\begin{theorem}
  Let $\I = (\R, \B)$ be an MWDS instance, where $\R$ and $\B$ are
  collections of ($1$)-intervals. Then, the MWDS LP is integral.
\end{theorem}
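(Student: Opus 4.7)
The plan is to exploit the already-performed preprocessing (no blue interval is contained in another) to impose a canonical order on the rows of the LP constraint matrix, and then verify the consecutive ones property on the columns. Once the constraint matrix $A$ is shown to have the consecutive ones property, the standard fact (see, e.g., \cite{schrijver2003combinatorial}) that such matrices are totally unimodular delivers integrality of the LP polytope immediately.

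Concretely, after the preprocessing I would sort the blue intervals $B_1, B_2, \ldots, B_M$ in $\I'$ by their left endpoints. Since no blue interval strictly contains another, this ordering simultaneously sorts the right endpoints: if $B_i$ has a smaller left endpoint than $B_j$ but a larger right endpoint, then $B_i \supseteq B_j$, contradicting the preprocessing. With this ordering fixed on the rows of $A$, I would fix an arbitrary red interval $R = [a, b]$ and prove the following key claim: the set of indices $\{j : B_j \cap R \neq \emptyset\}$ is a contiguous block of integers. This is the heart of the argument, so let me sketch it. Write $B_j = [l_j, r_j]$. Suppose $j_1 < j_2 < j_3$ and both $B_{j_1}$ and $B_{j_3}$ intersect $R$; I want $B_{j_2}$ to intersect $R$ as well. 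From the sorted order, $l_{j_2} \le l_{j_3} \le b$ (the latter because $B_{j_3} \cap R \neq \emptyset$) and $r_{j_2} \ge r_{j_1} \ge a$ (the latter because $B_{j_1} \cap R \neq \emptyset$). Hence $B_{j_2} \cap R \neq \emptyset$, which is exactly the consecutive ones property on the column of $A$ corresponding to $R$.

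Having established consecutive ones on columns, I would invoke the classical theorem that any $0/1$ matrix with the consecutive ones property is totally unimodular, so every vertex of the polytope $\{x \ge 0 : Ax \ge \mathbf{1}\}$ is integral. Combined with the earlier observation that the feasible regions of the LPs for $\I$ and $\I'$ coincide (since every constraint eliminated by the preprocessing is implied by a remaining one), this yields integrality of the MWDS LP for the original instance $\I$.

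I do not anticipate a real obstacle here; the only subtlety is remembering that the preprocessing is needed to make the left-endpoint and right-endpoint orderings agree, which in turn is what makes the ``middle'' interval $B_{j_2}$ squeeze between $B_{j_1}$ and $B_{j_3}$ on both sides. Without the preprocessing a nested pair of blue intervals could easily break consecutiveness, so that step of the argument carries the real content.
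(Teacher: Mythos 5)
Your proposal is correct and follows essentially the same route as the paper: preprocess to eliminate nested blue intervals, sort by left endpoints, verify the consecutive ones property column by column, and invoke total unimodularity. The only cosmetic difference is that you argue consecutiveness directly by noting that the right endpoints are also sorted and sandwiching the middle interval, whereas the paper runs the same idea as a short contradiction producing a nested pair.
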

\begin{proof}
  Let $A$ denote the constraint matrix corresponding to the
  preprocessed instance, where the rows (i.e., blue intervals) are
  sorted in the non-decreasing order of their left endpoints. We show
  that $A$ has the consecutive ones property for any column.

  Consider any column (i.e., a red interval) $R$. Let $B_j$ and $B_k$
  denote the first (leftmost) and last (rightmost) blue intervals
  intersecting $R$ in this order respectively. Note that $j \le
  k$. Now, suppose for contradiction that there is an interval
  $B_\ell$ with $j < \ell < k$ that does not intersect $R$. Therefore,
  we have that
  $\textsf{Left}(B_j) < \textsf{Left}(B_\ell) \le
  \textsf{Right}(B_\ell) < \textsf{Left}(R) \le
  \textsf{Right}(B_j)$. Here, the third and fourth inequalities follow
  from the assumptions that $R \cap B_\ell = \emptyset$ and
  $R \cap B_j \neq \emptyset$ respectively. However, this implies that
  $B_\ell \subset B_j$, which is a contradiction. Thus, the column
  corresponding to $R$ satisfies the consecutive ones property.
\end{proof}

\subsection{Hardness Results for MWDS}
We show that it is $\mathsf{NP}$-hard to obtain a $t-1-\epsilon$
approximation for MWDS problem for $t$-intervals, for any $t \ge 3$
and $\epsilon > 0$. The lower bound can be improved to $t-\epsilon$
for any $t \ge 2$ and $\epsilon > 0$, assuming the Unique Games
Conjecture (UGC). In fact, this result follows even in the very
restricted setting, where $\R$ is a collection of points, and $\B$ is
a collection of $t$-points, i.e., a union of at most $t$ distinct
points. Note that this is a special case where $\R$ and $\B$ are
collections of $t$-intervals.

Let $(X, \S)$ be an $f$-uniform instance of MWSC. That is, any element
in $X$ is contained in exactly $f$ sets of $\S$. We reduce this to the
MWDS problem as follows. For every set $S_i \in \S$, we add a distinct
red point $R_i$ on a line, and set its weight equal to that of
$S_i$. For any element $e_j \in X$, we add at most $f$ blue points
coinciding with points $R_i$, where $e_j \in S_i$. Note that a blue
$t$-point is ``covered'' if we select a red point that coincides with
its constituent $t$ points. Thus, there is a one-to-one correspondence
between a feasible solution to the original set cover instance, and a
feasible solution to the MWDS instance. We have the following hardness
results for the $f$-uniform MWSC problem.

\begin{theorem}
  It is $\mathsf{NP}$-hard to obtain a $f-1-\epsilon$ approximation
  for $f$-uniform Set Cover, for any $f \ge 3$ and $\epsilon > 0$
  \cite{DinurGKR05}.  \\Assuming UGC, the lower
  bound can be improved to $f-\epsilon$ for any $f \ge 2$ and
  $\epsilon > 0$ \cite{KhotR08}.
\end{theorem}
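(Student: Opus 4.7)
The plan is to derive both bounds by reducing from (Unique) Label Cover via the long-code paradigm, following Dinur--Guruswami--Khot--Regev for the $\mathsf{NP}$-hard statement and Khot--Regev for the UGC statement. First observe that an $f$-uniform Set Cover instance is equivalent, by swapping the roles of elements and sets, to Vertex Cover on an $f$-uniform hypergraph; so it suffices to prove the analogous hardness for hypergraph vertex cover.

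Given a (Unique) Label Cover instance $\Phi$ on vertex set $V$ with label alphabet $[R]$, I would replace each $v \in V$ with a long-code block of $2^R$ vertices identified with $\{0,1\}^R$. A vertex cover amounts to selecting, for each $v$, the set of inputs on which a Boolean function $f_v\colon\{0,1\}^R \to \{0,1\}$ takes value $1$. The hyperedges of uniformity $f$ are added so that covering all of them encodes a long-code consistency test for the Label Cover constraints under a biased product measure $\mu_p$ on $\{0,1\}^R$, with the bias $p = 1/f$ in the UGC setting and $p = 1/(f-1)$ in the $\mathsf{NP}$ setting. The cost of the resulting cover is then, up to normalization, $\sum_v \mathbb{E}_{x \sim \mu_p}[f_v(x)]$.

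For completeness, I would check that a Label Cover assignment $\sigma$ satisfying almost all constraints yields the dictator functions $f_v(x) = x_{\sigma(v)}$; each has $\mu_p$-measure exactly $p$, and together they cover every hyperedge, giving a cover of fractional cost $\approx p$. For soundness, the heart of the argument is that a cover whose cost is a $(1-\epsilon)$ factor smaller must, in a constant fraction of blocks, correspond to an $f_v$ with a coordinate of non-negligible influence; decoding these influential coordinates back into Label Cover labels then contradicts the soundness of $\Phi$.

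The main obstacle is the soundness analysis and the precise choice of $p$. In the UGC setting, the unique-projection property of $\Phi$ lets the Majority-Is-Stablest / invariance-principle machinery of Khot--Regev push the threshold all the way to $p = 1/f$, giving the $f - \epsilon$ bound. In the $\mathsf{P} \ne \mathsf{NP}$ setting one must instead use the multilayered PCP of Dinur--Safra together with Frankl--R\"odl-type covering estimates in place of the invariance principle; the extra slack needed to make the influence-extraction step go through loses exactly one factor in the reduction and yields $f - 1 - \epsilon$.
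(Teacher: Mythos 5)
The paper does not prove this statement at all: it is imported as a black box from the literature, namely the $\mathsf{NP}$-hardness of $(f-1-\epsilon)$-approximation for vertex cover in $f$-uniform hypergraphs due to Dinur, Guruswami, Khot and Regev, and the UGC-hardness of $(f-\epsilon)$ due to Khot and Regev, restated in the frequency-$f$ set cover language (the two problems are equivalent by exchanging the roles of sets and elements, exactly as in the first paragraph of your sketch). So the only content the paper needs here is that translation plus the citations; everything downstream in the paper uses the theorem as given.

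Judged as a proof, your proposal has genuine gaps: beyond the long-code setup and the (easy) completeness direction, it is an outline of two substantial PCP papers rather than an argument, and the soundness analyses --- which are the entire difficulty --- are only gestured at. Moreover, the gestures are not quite right. Khot--Regev's soundness does not go through Majority-Is-Stablest or the invariance principle; it analyzes large (nearly) independent sets in the $p$-biased long-code hypergraph via Friedgut-type junta/influence arguments and intersecting-family measure bounds, and decodes labels from the resulting junta coordinates using the uniqueness of the projections. In Dinur--Guruswami--Khot--Regev, the loss of one unit (yielding $f-1-\epsilon$ instead of $f-\epsilon$) is not ``extra slack in the influence-extraction step'': it comes from the extremal set-theoretic bounds that are available without unique projections --- biased-measure bounds for multiply-intersecting families in the spirit of Frankl and Ahlswede--Khachatrian, not Frankl--R\"odl covering estimates --- which cap the usable bias near $1/(f-1)$, together with the multilayered PCP machinery needed to make the decoding consistent across layers. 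If the theorem is to be used in the paper, cite it as the paper does; if you intend to reprove it, each of these soundness steps requires a complete argument, and as written none of them is supplied.
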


Therefore, from the above reduction, we get the following hardness
results for the MWDS problem.

\begin{theorem}
  It is $\mathsf{NP}$-hard to obtain a $t-1-\epsilon$ approximation
  for the special case of MWDS, where $\R$ is a set of points and $\B$
  is a set of $t$-points.  \\Assuming UGC, the lower bound can be
  improved to $t-\epsilon$ for any $t \ge 2$ and $\epsilon > 0$.
\end{theorem}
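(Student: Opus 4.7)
The plan is to formalize the reduction from $f$-uniform MWSC that was sketched in the paragraph preceding the theorem, and then invoke the cited hardness results with the identification $f=t$. Concretely, given an instance $(X,\S)$ of $f$-uniform Minimum Weight Set Cover, I would build an MWDS instance $(\R,\B)$ as follows. For each set $S_i\in\S$ introduce a distinct red point $R_i$ on a fixed line, and set the weight of $R_i$ equal to the weight of $S_i$. For each element $e_j\in X$, let $S_{i_1},\ldots,S_{i_f}$ be the $f$ sets of $\S$ containing it (exactly $f$ by $f$-uniformity), and create a blue $f$-point $B_j$ whose $f$ constituent points are placed to coincide with $R_{i_1},\ldots,R_{i_f}$ respectively.

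The next step is to establish a weight-preserving bijection between feasible solutions of the two instances. A subset $\R'\subseteq\R$ dominates $\B$ iff for every $e_j\in X$ some constituent point of $B_j$ coincides with an element of $\R'$, which by construction is equivalent to saying that $\{S_i : R_i\in\R'\}$ covers $X$. Since the weight map is the identity, optimal values coincide and every $\alpha$-approximation for MWDS transfers to an $\alpha$-approximation for $f$-uniform MWSC.

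Combining the reduction (with $f=t$) with the cited hardness results then finishes the proof: the $\NP$-hardness of $(f-1-\epsilon)$-approximation for $f$-uniform Set Cover \cite{DinurGKR05} yields the unconditional bound for $t\ge 3$, and the UGC-based $(f-\epsilon)$-hardness of Khot--Regev \cite{KhotR08} yields the UGC bound for $t\ge 2$. Note that because points on a line are the degenerate case of both intervals and pseudo-disks, the hardness immediately propagates to the setting where $\R$ and $\B$ are collections of $t$-intervals (or $t$-pseudo-disks), matching the claim made in the introduction.

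There is essentially no analytical obstacle: the only point worth double-checking is that the construction respects the allowed model, namely that constituent points of a single blue $t$-point may coincide with constituent points of another blue $t$-point (which the excerpt explicitly permits, since it does not require the $t$ objects forming a $t$-object to be pairwise disjoint, nor that distinct $t$-objects be disjoint). With this noted, the reduction has polynomial size and runs in polynomial time, so the approximation hardness transfers verbatim.
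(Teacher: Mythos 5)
Your proposal is correct and is essentially identical to the paper's own argument: the same reduction from $f$-uniform MWSC (red points for sets, blue $t$-points for elements placed to coincide with the red points of the containing sets), the same weight-preserving correspondence between solutions, and the same invocation of the $f$-uniform Set Cover hardness results of Dinur et al.\ and Khot--Regev with $f=t$. No gaps.
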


\section{Maximum Weight Independent Set}
\label{sec:packing}
We consider MWIS of $t$-objects. Here, we are given a set $\S = \{S_1, \ldots, S_n\}$\footnote{In this section, we assume that we are given the geometric representation of the $t$-objects in $\S$.}, where each $S_i \in \S$ is a $t$-object, and has a non-negative weight $w_i$. The goal is to find maximum weight independent set $\S' \subseteq \S$. That is, for any distinct $S_i, S_j \in \S'$, we must have $S_i \cap S_j = \emptyset$.

Let $\mathcal{A}(\S)$ denote the arrangement of $\S$, and let
$\mathcal{V}(\S)$ denote the set of vertices in $\mathcal{A}(\S)$. We
use the LP rounding algorithm for Maximum-Weight Independent Set from
\cite{ChanH12}. First, we state the LP relaxation from \cite{ChanH12}.

\begin{mdframed}[backgroundcolor=gray!9]
	\begin{alignat}{3}
	\text{maximize}\quad \displaystyle&\sum\limits_{S_i \in \S} w_i x_i & \nonumber \\
	\text{subject to}\quad \displaystyle&\sum\limits_{S_i \ni p} x_{i} \leq 1, \quad & \forall p \in \mathcal{V}(\S) \label[constr]{constr:indep}\\
	\displaystyle & \qquad x_i \in [0, 1], \quad & \forall S_i \in \S\label[constr]{constr:indep-fractional-x}
	\end{alignat}
\end{mdframed}

We have a simple observation that relates the union complexities of
$t$-objects and the corresponding ($1$-)objects.

\begin{observation} \label{obs:union} Let $\S$ be a collection of
  $t$-objects, and suppose the collection of underlying ($1$-)objects
  has union complexity $u(\cdot)$. Then, the union complexity of any
  $k$ objects in $\S$ is at most $u(kt)$.
\end{observation}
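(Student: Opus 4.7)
The plan is to argue that the observation is essentially tautological once one unpacks the definitions, since the union of a collection of $t$-objects coincides, as a point set in $\mathbb{R}^2$, with the union of the underlying $1$-objects.

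First, I would fix an arbitrary subcollection $\mathcal{S}'' \subseteq \mathcal{S}$ with $|\mathcal{S}''| = k$, say $\mathcal{S}'' = \{S_{i_1}, \ldots, S_{i_k}\}$. By the definition of a $t$-object, each $S_{i_j}$ is a union of at most $t$ base objects $o_{i_j}^{(1)}, \ldots, o_{i_j}^{(t)}$. Let $\mathcal{O}$ denote the multiset of all these base objects across $j = 1, \ldots, k$; then $|\mathcal{O}| \le kt$, and by construction
\[
\bigcup_{j=1}^{k} S_{i_j} \;=\; \bigcup_{o \in \mathcal{O}} o.
\]
Thus, the boundary of the union of the $k$ $t$-objects is identical to the boundary of the union of the at-most-$kt$ underlying $1$-objects in $\mathcal{O}$. (If two of the underlying base objects happen to coincide, removing duplicates only decreases the count, and the union complexity bound is taken to be non-decreasing by the convention set in the definition, so we lose nothing.)

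Next, I would apply the hypothesis that the underlying $1$-object class has union complexity $u(\cdot)$: since $\mathcal{O}$ consists of at most $kt$ such $1$-objects, the number of vertices on the boundary of $\bigcup_{o \in \mathcal{O}} o$ is at most $u(|\mathcal{O}|) \le u(kt)$, using the monotonicity of $u$. Combining this with the identity of unions above yields the claimed bound $u(kt)$ on the union complexity of any $k$ elements of $\mathcal{S}$.

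There is no genuine obstacle here; the only subtlety worth stating explicitly is that $u$ is a property of the underlying base-object family, and the bound on the boundary complexity of a union is insensitive to how we choose to group the base objects into $t$-objects. The monotonicity assumption $u(n) \ge n$ and non-decreasingness stipulated in the definition of union complexity are exactly what makes the bound $u(kt)$ (as opposed to $u(|\mathcal{O}|)$) valid in all cases, including when some base objects are repeated or when some $S_{i_j}$ is a union of strictly fewer than $t$ objects.
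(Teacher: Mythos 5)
Your proposal is correct and follows essentially the same route as the paper: replace the $k$ chosen $t$-objects by their at most $kt$ underlying base objects, note the unions (and hence boundaries) coincide, and invoke the union complexity $u(\cdot)$ of the base family together with its monotonicity to get the bound $u(kt)$. Your version just spells out the identity of the unions and the role of monotonicity slightly more explicitly than the paper does.
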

\begin{proof}
  Let $\R \subseteq \S$ be any subset of $t$-objects. Let
  $\R' = \{ r^{(k)}_i \in R_i : R_i \in R \}$ denote the underlying
  set of ($1$)-objects. Note that $|R'| \le t \cdot |R|$. Since the
  underlying collection of ($1$)-objects has union complexity
  $u(\cdot)$, the number of arcs on the boundary of $R'$ is at most
  $u(|R'|) \le u(t \cdot |R|)$.
\end{proof}

Chan and Har-Peled \cite{ChanH12} give a randomized LP rounding algorithm with the following guarantee.

\begin{theorem} \label{thm:mis} There is an
  $\Omega(n/u(n))$-approximation algorithm for an MWIS instance, for a
  collection of $n$ geometric objects with union complexity
  $u(\cdot)$.
\end{theorem}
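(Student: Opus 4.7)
The plan is to solve the LP relaxation displayed above and apply the randomized LP rounding scheme of \cite{ChanH12}. Let $x^*$ denote an optimal fractional solution, with value $\mathrm{OPT}_{\mathrm{LP}}$; this upper bounds the true MWIS optimum since every integral independent set satisfies the depth constraint $\sum_{S_i \ni p} x_i \le 1$ at each vertex $p \in \mathcal{V}(\S)$.

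The rounding then proceeds in two phases. First, fix a scale parameter $c = \Theta(u(n)/n)$ and form a random subset $T \subseteq \S$ by independently keeping each $S_i$ with probability $x_i^*/c$; by linearity of expectation, $\mathbb{E}[w(T)] = \mathrm{OPT}_{\mathrm{LP}}/c = \Omega(\mathrm{OPT}_{\mathrm{LP}} \cdot n / u(n))$. Second, process $T$ into a genuinely independent $I \subseteq T$, for instance greedily by scanning objects in decreasing order of weight and admitting each one only when it is disjoint from every object already admitted. The entire task reduces to proving $\mathbb{E}[w(I)] = \Omega(\mathbb{E}[w(T)])$.

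The heart of the argument is bounding the expected weight discarded during cleanup. For any pair of intersecting sampled objects $S_i, S_j \in T$ the boundary crossing appears as a vertex $p$ of the arrangement of $T$, and the LP constraint $\sum_{S_\ell \ni p} x_\ell^* \le 1$ implies that the total sampled weight containing $p$ has expectation $O(\max_\ell w_\ell / c)$. The crucial point is that one estimates the expected loss not by summing over all vertices of the full arrangement of $\S$---which would be far too many---but over the boundary vertices of the union of $T$ itself, whose expected number is at most $u(\mathbb{E}[|T|]) \le u(n)$. A careful weighted accounting, together with the assumption $u(n) \ge n$ and the LP feasibility at each boundary vertex, shows that the expected discarded weight is at most a constant fraction of $\mathbb{E}[w(T)]$ once $c$ is tuned appropriately.

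The main obstacle I anticipate is precisely this localization of the conflict count to the union boundary of $T$. A naive pair-by-pair union bound incurs an extra factor of the full arrangement complexity, which can be as large as $u(n)^2$ and would render the approximation ratio trivial. The essential device is to orchestrate the cleanup so that every discarded object is charged to an arc on the union boundary of the objects kept so far, so that the total charge is controlled by $u(|I|)$ rather than by pairwise-intersection counts. Once this charging scheme is in place, Markov's inequality converts the expected guarantee into a polynomial-time randomized $\Omega(n/u(n))$-approximation, as claimed.
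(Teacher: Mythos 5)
First, note that the paper does not prove this statement at all: Theorem~\ref{thm:mis} is invoked as a black-box result of Chan and Har-Peled \cite{ChanH12}, so your proposal is really an attempt to reprove the cited theorem. As such a sketch, it has the right skeleton (solve the vertex-depth LP, subsample at scale roughly $u(n)/n$, resolve conflicts, and argue the loss is a constant fraction), but the step you defer --- ``a careful weighted accounting \ldots shows that the expected discarded weight is at most a constant fraction of $\mathbb{E}[w(T)]$'' --- is precisely the entire technical content of \cite{ChanH12}, and the charging scheme you gesture at does not work as stated. Two concrete problems. (i) Containment: two objects can intersect with no boundary crossing, so the conflict produces no vertex of the arrangement, the LP constraints at $\mathcal{V}(\S)$ do not witness it, and there is nothing on the union boundary to charge it to; this case needs separate handling (e.g.\ depth constraints at all faces, or a direct argument for nested pairs). (ii) The charging itself: if a discarded object $S_j$ conflicts with a heavier kept object $S_i$, a crossing point $p$ of their boundaries need not lie on the union boundary of the kept (or sampled) objects --- it lies inside $S_i$ and possibly inside many other sampled objects --- so ``every discarded object is charged to an arc on the union boundary of the objects kept so far'' is not a valid accounting. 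The actual mechanism in \cite{ChanH12} is a Clarkson--Shor-style argument: for a crossing vertex $p$ of two sampled objects, the LP constraint $\sum_{\ell\ni p}x_\ell\le 1$ lower-bounds by a constant the probability that $p$ is \emph{exposed} on the union of the sample, which converts the union-complexity bound $u(\cdot)$ into a bound on the expected number of conflicting sampled pairs; that lemma is what your sketch asserts but never supplies.

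Moreover, even granting the pair-counting lemma, your weighted cleanup does not close the argument. Bounding the expected sampled weight through a fixed vertex by $O(\max_\ell w_\ell/c)$ says nothing useful when weights are heterogeneous, and the greedy-by-decreasing-weight rule only guarantees that each discarded object is lighter than \emph{some} kept neighbor; a single kept object can absorb many such charges, so the discarded weight is not bounded by a constant fraction of the kept weight. Turning ``few conflicts in expectation'' (an unweighted statement) into $\mathbb{E}[w(I)]=\Omega(\mathbb{E}[w(T)])$ requires a per-object (or per-weight-class) survival argument, which is again part of what \cite{ChanH12} actually proves. A smaller point: $\mathbb{E}[\#\text{union boundary vertices of }T]\le u(\mathbb{E}|T|)$ needs justification, since $u$ is applied to a random variable. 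In short, the plan is the right one, but as written the proof has a genuine gap exactly at the step that makes the theorem true.
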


Combining Observation \ref{obs:union} and Theorem \ref{thm:mis}, we get the following result.
\begin{theorem}
  There is an $\Omega(n/u(nt))$-approximation algorithm for an
  MWIS instance for an instance of $t$-objects, where the underlying
  ($1$-)objects have union complexity $u(\cdot)$.
\end{theorem}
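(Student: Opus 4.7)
The plan is to combine Observation \ref{obs:union} with Theorem \ref{thm:mis} in a black-box fashion. The key idea is to view the collection $\S$ of $n$ $t$-objects as a collection of $n$ geometric objects in its own right, whose union complexity function inherits a bound from that of the underlying $1$-objects.

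First, I would define $u'(k) := u(kt)$ and verify that $u'$ serves as a valid union complexity function for $\S$. By Observation \ref{obs:union}, any subcollection $\R \subseteq \S$ with $|\R|=k$ has its boundary consisting of at most $u(kt) = u'(k)$ arcs. The function $u'$ is non-decreasing (since $u$ is) and satisfies $u'(k) \ge k$ because $u(kt) \ge kt \ge k$ by the standing assumption $u(n) \ge n$. Hence all hypotheses needed to invoke the Chan--Har-Peled rounding on $\S$ are met.

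Next, I would apply Theorem \ref{thm:mis} directly to the instance $\S$ (using the explicit geometric representation assumed in the footnote at the start of Section~\ref{sec:packing}, which is needed to compute the arrangement $\mathcal{A}(\S)$ and solve the LP with variables indexed by $t$-objects). Substituting $u'$ for the union complexity function yields an $\Omega(n/u'(n)) = \Omega(n/u(nt))$-approximation, which is exactly the claimed bound.

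The only subtlety worth flagging is that the algorithm must treat each $t$-object as a single atomic element: the LP variable $x_i$ corresponds to $S_i \in \S$ and the arrangement vertices $\mathcal{V}(\S)$ are formed from the $t$-objects themselves rather than from the $nt$ constituent $1$-objects. If one were to na\"ively run Chan--Har-Peled on the collection of $nt$ base objects, one would obtain an independent set among those, not among the $t$-objects, and the reduction would fail. With the correct formulation, however, the rounding guarantee transfers cleanly, and no additional geometric argument is required beyond Observation \ref{obs:union}.
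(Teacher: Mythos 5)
Your proposal is correct and matches the paper's argument exactly: the paper also obtains the theorem by combining Observation \ref{obs:union} with Theorem \ref{thm:mis} applied to the $n$ $t$-objects viewed as atomic objects, so that the effective union complexity is $u(nt)$. Your remark about keeping the LP variables and arrangement at the level of $t$-objects (not the $nt$ constituents) is consistent with the LP formulation the paper uses.
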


In particular, since pseudo-disks have linear union complexity
\cite{pach2008state}, an $\Omega(1/t)$-approximation follows for MWIS
of $t$-pseudo-disks. As noted in the introduction, this result is
tight up to constant factors even for the case of $t$-intervals.

\paragraph*{Packing $t$-objects}
We consider a related problem called Maximum Weight Region Packing. We
are given a collection of $t$-objects $\mathcal{S}$, and a set of
points $P$. Each $t$-object $S_i$ has a weight $w_i$ and each point
$p \in P$ has a capacity $c(p)$, which is a positive integer. The goal
is to find a maximum-weight collection $\mathcal{S}'$ of $t$-objects,
such that for each point $p \in P$, the number of regions of
$\mathcal{S}'$ that contain $p$ is at most $c(p)$. Note that MWIS is a
special case when $P =\mathcal{V}(S)$ is the set of all points in
arrangement and $c(p) = 1$ for each $p \in P$.

Extending the LP rounding algorithm of Chan and Har-Peled
\cite{ChanH12} for the MWIS problem, Ene et al.\ \cite{EneHR17} gave
an $\Omega((\frac{n}{u(n)})^{1/C})$-approximation for Maximum Weight Region
Packing problem, where $C$ is the minimum capacity of any point. Using
similar arguments as in MWIS, we obtain the following result.

\begin{theorem}
  There exists a polynomial $\Omega(1/t^{1/C})$-approximation
  algorithm for Maximum Weight Region Packing with $t$-pseudo-disks.
\end{theorem}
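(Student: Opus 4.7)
The plan is to mimic exactly the argument used in the preceding theorem for MWIS, replacing the black-box rounding result of Chan and Har-Peled with the capacitated extension of Ene, Har-Peled, and Raichel. The result of Ene et al.\ that is invoked gives, for any instance of Maximum Weight Region Packing on $n$ geometric regions whose union complexity is governed by $u(\cdot)$, an $\Omega((n/u(n))^{1/C})$-approximation, where $C = \min_{p \in P} c(p)$. The only task, then, is to bound the union complexity of a collection of $t$-pseudo-disks and to plug the bound into this theorem.

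First, since pseudo-disks have linear union complexity \cite{pach2008state}, i.e.\ $u(m) = O(m)$, Observation \ref{obs:union} immediately yields that any sub-collection of $k$ of our $n$ input $t$-pseudo-disks has union complexity at most $u(kt) = O(kt)$. Hence the effective union complexity function for the instance we hand to the Ene et al.\ algorithm is $u'(k) = O(kt)$, linear in $k$ with a blow-up factor of $t$.

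Next, substitute $u'$ into the Ene et al.\ bound: the approximation ratio becomes
\[
\Omega\!\left( \left(\frac{n}{u'(n)}\right)^{1/C} \right) \;=\; \Omega\!\left( \left(\frac{n}{nt}\right)^{1/C} \right) \;=\; \Omega(1/t^{1/C}),
\]
which is the claimed guarantee. The LP relaxation that Ene et al.\ round is the natural capacitated analogue of the MWIS LP displayed above, with the right-hand side of \cref{constr:indep} replaced by $c(p)$; the same LP is feasible for our $t$-pseudo-disk instance, so no further modification is needed.

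The one subtle point, and the only place where some care is warranted, is that the Ene--Har-Peled--Raichel algorithm is stated for a collection of regions described by a union complexity function, rather than for objects drawn from a specific shape class. Since $t$-pseudo-disks are not themselves pseudo-disks, it is important to verify that their analysis uses only the union-complexity abstraction (as opposed to, say, the combinatorial two-intersection property of pseudo-disks). This is analogous to, and no more delicate than, the step taken in the preceding MWIS result when applying Theorem \ref{thm:mis} to $t$-pseudo-disks. Once this is confirmed, the theorem follows immediately, and the polynomial running time is inherited from the Ene et al.\ algorithm together with the polynomial-size description of the arrangement of $\S$.
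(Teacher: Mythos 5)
Your proposal matches the paper's argument: the paper likewise combines Observation \ref{obs:union} (union complexity $u(kt)=O(kt)$ for $t$-pseudo-disks, using linearity of pseudo-disk union complexity) with the $\Omega((n/u(n))^{1/C})$-approximation of Ene et al., yielding $\Omega(1/t^{1/C})$ exactly as you compute. This is essentially the same proof, so nothing further is needed.
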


\section{Concluding Remarks}
In geometric settings, Quasi-Uniform Sampling has been used for Set
Multicover \cite{BansalP16}, and other related ideas have been used
for Partial Set Cover \cite{InamdarV18,ChekuriQZ19}. Our results for
MWDS and MWSC using $t$-objects can be extended to these settings, and
we leave these extensions for a future version of the paper. We also
note that even for the special case of MWDS of $t$-intervals, there is
a gap between $O(t \log t)$-approximation and a lower bound of
$\Omega(t)$, as shown in Section \ref{sec:covering}. Resolving this
gap is an interesting open question.

\bibliography{multiple.bib}

\end{document}